\newcommand{\du}{\textrm{d}}
\newtheorem{theorem}{Theorem}
\title{Global Stability of an SEIR Epidemic Model where Empirical Distribution of Incubation Period has Approximated by Coxian Distribution}
\author{
  Sungchan Kim\footnotemark[1]\thanks{Department of Mathematics, Pusan National University, Busan 46241, South Korea}
   \And
 Jong Hyuk Byun\footnotemark[2]\thanks{Finance·Fishery·Manufacture Industrial Mathematics Center on Big Data, Pusan National University, Busan 46241, South Korea}\\
   \And
 Il Hyo Jung\footnotemark[1] \footnotemark[2] \footnotemark[3]\thanks{Corresponding Author (E-mail: \texttt{ilhjung@pusan.ac.kr})} }
\begin{document}
\maketitle

\begin{abstract}
In this work, we have developed a Coxian distributed SEIR model in incorporating an empirical incubation period. We show  that the global dynamics are completely determined by a basic reproduction number. An application of the Coxian distributed SEIR model using data of an empirical incubation period is explored. The model may be useful for resolving causing the realistic intrinsic parts in classical epidemic models since Coxian distribution approximately converges to any distribution.
\end{abstract}

\keywords{Basic reproduction number \and Coxian distributed SEIR model \and Global stability\and Infectious diseases modeling \and LaSalle's invariance principle}

\maketitle

\section{Introduction}
Compartmental models in epidemiology are well used to simplify  mathematical modeling of infectious diseases (\cite{anderson1992infectious}). Its origin is in the early 20th century, with an important early work being that of Kermack and McKendrick in 1927 (\cite{kermack1927contribution}). The population is subdivided into three groups: susceptible, infectious and recovered groups, and the model is referred to as an SIR epidemic model. The model is applicable to infectious diseases such as measles, chickenpox, or mumps, where infection confers immunity.

For many important infections there is a significant incubation period during which the individual has been infected but is not yet infectious themselves (\cite{smith2001global}). During this period the individual is in compartment $E$, known as the exposed term, and such type models are called SEIR models. To express the distribution of incubation period, we consider age-structured SEIR framework. Let $S(t)$, $E(t)$, $I(t)$ and $R(t)$ be the fraction of susceptible, exposed, infectious and recovered population at time $t$, respectively. Motivated by \cite{hethcote1980integral}, consider an SEIR model of integro-differential type:
\begin{subequations}\label{master}
\begin{align}
\cfrac{\du S(t)}{\du t}&= \mu -\beta(t)S(t)I(t)-\mu S(t), \label{master1}\\
E(t)&=E_0P(t)\exp(-\mu t)+\int_{0}^t \beta(t-u) S(t-u) I(t-u) P(u) \exp(-\mu u) \du u,\label{master2}\\
\cfrac{\du I(t)}{\du t}&= \int_0^t \beta(t-u)S(t-u)I(t-u)[-P_u(u)]\exp(-\mu u)\du u -(\gamma+\mu) I(t),\label{master3}\\
\cfrac{\du R(t)}{\du t}&= \gamma I(t)-\mu R(t),\label{master4}
\end{align}
\end{subequations}
where $\beta(t)$ is the time-dependent contact rate, $1/\gamma$ is the mean period of infectiousness, $P(u)$ is the probability of remaining incubated (assumed asymptomatic and not infectious) to time $u$ from entering, and $\mu$ is the natural mortality. $P_u$ denotes the derivative with respect to $u$. The initial conditions are given as $S(0)=S_0,~E(0)=E_0,~I(0)=I_0,~R(0)=R_0\textrm{, and } S_0+E_0+I_0+R_0=1$. In general, it is reasonable to assume that a function $P:[0,\infty)\rightarrow[0,1] \in \mathcal{L}^1$ is Lebesgue integrable, differentiable a.e., non-increasing, $P(0)=1$, and $\lim_{t\rightarrow \infty} P(t) = 0$ (\cite{van2007modeling}). Especially, let $T$ be a continuous random variable of incubation period of some infectious diseases. If we know the distribution of incubation period heuristically, we can define $P$ by
\begin{align}\label{survival}
P(u)=\mathbb{P}(T>u)= 1-F(u),
\end{align}where $F$ is a cumulative density function of random variable of infectious period $T$. $P$ is called the {\it survival function}. Many previous studies have developed with the various survival functions $P$ such as Dirac-delta \cite{mittler1998influence, macdonald2008biological}, Exponential \cite{li1995global}, Gamma \cite{lloyd2001realistic,krylova2013effects,safi2011qualitative,feng2007epidemiological}, Mittag-Leffler \cite{angstmann2017fractional,byun2019modeling}, and joint \cite{melesse2010global,iwami2010global} types. In some cases, this framework of SEIR model can have been simplified by system of ordinary or fractional differential equations with various incubation period distribution (see \autoref{tablehistory}).

Following these, in this work, we construct an SEIR epidemic model, which is named  Coxian distributed SEIR model, to incorporate empirical incubation period distribution. Coxian distribution, one of the phase-type distributions, can be considered as a mixture of hypo-- and hyper--exponential distributions. The novelty of the distribution is the density in the class of all non--negative distribution functions \cite{buchholz2014input} and so {\it all types of incubation periods are approximated to Coxian distribution}. A basic reproduction number, $R_0$, was derived. Applying a Lyapunov theory, we show that the basic reproduction number determines the global stability of equilibrium of our model with constant contact rate case. We also give an application illustrating how to use a Coxian distributed SEIR model with empirical incubation period data. The limitation of the model is also discussed in the last section.

\section{Derivation of Coxian Distributed SEIR Model}
A Coxian distribution with $n$ phases is defined as the time until absorption into state $0$, starting from state $n$, of the Markov process with discrete states in continuous time. A focal state, $X$, can be partitioned into $n$ sub-states $X_n,\cdots,X_1$, each with independent dwell time distributions that are exponentially distributed with rates $\lambda_i$, $i=n,n-1,\cdots,1$. Inflow rates into the state $X_n$ can be described by non-negative, integrable inflow rates ${I}_X(t)$. Particles that transition out of a sub-state $X_k$ at time $t$ transition into either a different sub-state $X_{k-1}$ with probability $\bar{p}_{k-1}$, or enter the recipient state $X_0$ with probability $p_{k-1}=1-\bar{p}_{k-1}$, for $k=n,n-1,\cdots,1$. Then, Coxian distributed random variable $X$ \cite{buchholz2014input} is the time until absorption in $X_0$ starting from state $n$ and the probability density function $f_X$ is given as $f_X(t)=\mathbf{p}\exp{(t\mathbf{Q})}\mathbf{q}$ where 
\begin{align}\label{matrixform}\begin{split}
\mathbf{p}&=\begin{bmatrix}1&0&\cdots&0\end{bmatrix}\in M_{1\times n}(\mathbb{R}),
\\
\mathbf{Q}&=\begin{bmatrix}
-\lambda_n&\bar{p}_{n-1}\lambda_n & 0 & &\cdots &0 \\
0&-\lambda_{n-1}&\bar{p}_{n-2}\lambda_{n-1}&0 &\cdots &0\\
&&&\vdots&&\\
0&\cdots &&&-\lambda_2&\bar{p}_1\lambda_2\\
0&\cdots &&&0&-\lambda_1
\end{bmatrix}\in M_{n\times n}(\mathbb{R}),\\
\mathbf{q}&=-\mathbf{Q1}=\begin{bmatrix}
p_{n-1}\lambda_n &p_{n-2}\lambda_{n-1}&\cdots &p_1\lambda_2 &\lambda_1
\end{bmatrix}^\top\in M_{n\times 1}(\mathbb{R}),\end{split}\end{align}
  $\mathbf{1}$ is an $n\times 1$ vector of ones and $\mathbf{Q}$ is transition rate matrix, $0<p_i<1$ for $i=1,2,\cdots,n-1$ and $\
  \lambda_i>0$ for $i=1,2,\cdots,n$. 
 The survival function $P$ is given as 
 \begin{align}\label{survcox}
 P(t)=\mathbf{p}\exp({t\mathbf{Q}})\mathbf{1}.
 \end{align} 
The Laplace transform $\mathcal{L}$ of $f_X$ is given explicitly by
 \begin{align}\label{laplace}
 [\mathcal{L}f_X](s)=\mathbf{p}(s\mathbf{I}-\mathbf{Q})^{-1}\mathbf{q} =\sum_{i=0}^{n-1} \left[ a_i \prod_{j=1}^{n-i} \cfrac{\lambda_{n+1-j}}{s+\lambda_{n+1-j}}\right],\end{align} where $\mathbf{I}$ is $n\times n$ identity matrix and $a_i=p_i\prod_{j=1}^{n-1-i}\bar{p}_{j+i}$. 
 
Derivation of Coxian distributed SEIR model is the following: assume that infectious period is distributed by Coxian with the survival function \eqref{survcox} in the model \eqref{model}. First, note that if we put the $1\times n$ vector $\mathbf{P}$ as:
\begin{align*}
\mathbf{P}(t)&:=\mathbf{p}\exp(t\mathbf{Q})=\begin{bmatrix}
P_n&P_{n-1}&\cdots&P_2&P_1
\end{bmatrix},
\end{align*}
where $P_i$'s are functions of $t$, then the survival function $P$ is expressed as
\begin{align}\label{survcox2}P(t)=\sum_{i=1}^n P_n(t).
\end{align}
Since $\frac{\du  [\exp{(t\mathbf{Q})}]}{\du t}= \exp{(t\mathbf{Q})}\mathbf{Q}$, $\mathbf{P}'(t)=\mathbf{P}(t)\mathbf{Q}$ holds and so \begin{align} \label{dif}
\begin{split}
\cfrac{\du P_n}{\du t} &= -\lambda_n P_n,\\
\cfrac{\du P_i}{\du t} &= \bar{p}_{i}\lambda_{i+1}P_{i+1}-\lambda_{i}P_i,\textrm{ for } i=n-1,\cdots,1,
\end{split}
\end{align} and \begin{align}\label{pprime}\cfrac{\du P(t)}{\du t}=\sum_{i=1}^n\cfrac{\du P_i(t)}{\du t}=-\lambda_1 P_1(t)-\sum_{i=1}^{n-1}p_i \lambda_{i+1}P_{i+1}(t).\end{align}
Now, if we put \eqref{survcox} to \eqref{master2}, \eqref{master2} has the form from \eqref{survcox2}:
\[	E(t)=\sum_{i=1}^n \int_{0}^\infty P_i(u) \beta(t-u) S(t-u) I(t-u)  \exp(-\mu u) \du u.\] So if we put 
\begin{align}\label{diff} E_i(t)= \int_{0}^\infty P_i(u) \beta(t-u) S(t-u) I(t-u)  \exp(-\mu u) \du u,
\end{align} for $i=n,n-1,\cdots,1$ and differentiating \eqref{diff} for each $i=n,n-1,\cdots,1$, then \eqref{dif} yields:
\begin{align}\label{gm}
\begin{split}
\cfrac{\du E_n(t)}{\du t}&=\beta(t) S(t)I(t) - (\lambda_n+\mu) E_n(t),\\
\cfrac{\du E_{i}(t)}{\du t}&=\bar{p}_{i}\lambda_{i+1} E_{i+1}(t)-(\lambda_{i}+\mu)E_{i}(t),~i=n-1,\cdots,2,1,
\end{split}
\end{align}
and substituting \eqref{pprime} into \eqref{master3} yields
\[\cfrac{\du I(t)}{\du t}= \int_0^t \beta(t-u)S(t-u)I(t-u)\left[\lambda_1 P_1(u)+ \sum_{i=1}^{n-1}p_i\lambda_{i+1}P_{i+1}(u)\right]\exp(-\mu u)\du u -(\gamma+\mu) I(t),\]
and so the \eqref{master3} becomes
\begin{align}\label{mo2}
\cfrac{\du I(t)}{\du t}=\lambda_1 E_1(t)+\sum_{i=1}^{n-1} p_{i}\lambda_{i+1}  E_{i+1}(t) - (\gamma+\mu)I(t).
\end{align}
From \eqref{gm} and \eqref{mo2}, targeted $n-$chained Coxian distributed SEIR model is derived from \eqref{master} when we put $p_0=1$, as follows:
\begin{align}\label{model}
\begin{split}
\cfrac{\du S(t)}{\du t}&=\mu-\beta(t) S(t)I(t)-\mu S(t), \\
\cfrac{\du E_n(t)}{\du t}&=\beta(t) S(t)I(t) - (\lambda_n+\mu)E_n(t),\\
\cfrac{\du E_{i}(t)}{\du t}&=\bar{p}_{i}\lambda_{i+1} E_{i+1}(t)-(\lambda_{i}+\mu)E_{i}(t),~i=n-1,\cdots,2,1,\\
\cfrac{\du I(t)}{\du t}&=\sum_{i=0}^{n-1} p_{i} \lambda_{i+1}  E_{i+1}(t) - (\gamma+\mu)I(t), \\
\cfrac{\du R(t)}{\du t}&=\gamma I(t)-\mu R(t),
\end{split}
\end{align}with $\sum_{i=1}^nE_i(t)=E(t)$. The schematic diagram for the model \eqref{model} is depicted in \autoref{fig1}.

Basic properties of the model are below: the region \[\Omega=\left\{(S,E_n,E_{n-1},\cdots,E_1,I,R)\in\mathbb{R}^{n+3}_{+}:S+\sum_{j=1}^n E_j+I+R\equiv 1 \right\}.\] is obviously positive-invariant. Also if the initial data $S(0), E_i(0), I(0)$ and $R(0)$ for $i=n,n-1,\cdots,1$ are positive, then the solutions $S(t), E_i(t), I(t)$ and $R(t)$ of the model \eqref{model} are non-negative for all $t>0$. Indeed, if we assume that there exists $t_1>0$ such that $S(t)>0,E_i(t)>0,I(t)>0,R(t)>0$, $i=n,n-1,\cdots,1$ in $t \in [0,t_1)$ and $S(t_1)\cdot\prod_{i=1}^n E_i(t_1) \cdot I(t_1) \cdot R(t_1)=0$ (i.e., one of states is 0 at the time $t_1$). Then, 
\[\cfrac{\du S(t)}{\du t}=\mu-(\zeta(t)+\mu) S(t)\ge -(\zeta(t)+\mu)S(t) \] when we take a force of infection $\zeta(t)=\beta(t)I(t)$, we may get \[S(t_1)\ge S(0) \exp\left\{-\mu t_1-\int_0^{t_1}\zeta(u)\du u\right\}>0\]holds from Gronwall. Similarly, it can be shown that $E_i(t_1)>0,I(t_1)>0$ and $R(t_1)>0$ for $i=n,n-1,\cdots,1$ and this yields a contradiction.

\section{The Basic Reproduction Number, $R_0$}
When the parameters are constant, we can compute the basic reproduction number. Assume the number of infectious human is rare in the early phase, i.e. $S\approx 1$. Consider $\beta(t)\equiv 
\bar{\beta}$. Define the average time that an individual remains in the exposed class becoming infectious without dying, $\hat{P}$, as
$$\hat{P}= \int_0^\infty \exp(-\mu u)P(u)\du u.$$ We note that $\mu \hat{P}$ represents the probability that an exposed individual will die during the course of incubating, and so the probability of surviving the exposed class is $1-\mu \hat{P}$. Note that since $0 \le P(t)\le 1$ for $t\ge 0$, 
the inequality $$0\le \hat{P} \le  \int_0^\infty \exp(-\mu u)\du u = 1/\mu$$ holds and so the probability $\mu\hat{P}$ is well-defined. From the definition of the basic reproduction number, we can represent the basic reproduction number, denoted by $R_0$, as $$R_0=\bar{\beta} \cdot (1-\mu \hat{P})\cdot \frac{1}{\gamma+\mu}.$$ Since $\mu \hat{P} = 1+\int_0^\infty \exp{(-\mu u)}P'(u)\du u$ 
from integration by parts, and the relation $P'(t)=-f_X(t)$ in the usual sense,
 we get $$1-\mu\hat{P} = \int_0^\infty \exp(-\mu u)f_X(u)\du u = [\mathcal{L}f_X](\mu).$$
From \eqref{laplace}, $R_0$ is explicitly represented as:
\begin{align}\label{r0}
R_0=\cfrac{\bar{\beta}}{\gamma+\mu} \sum_{i=0}^{n-1} \left[ a_i \prod_{k=1}^{n-i} \cfrac{\lambda_{n+1-k}}{\lambda_{n+1-k}+\mu}\right],\end{align}
where $a_i=p_i\prod_{j=1}^{n-1-i}\bar{p}_{j+i}$. Note that if we ignore demographic part, i.e. $\mu=0$, then $R_0$ is expressed as $\bar{\beta}/\gamma$, same as the one in classical model.

\section{Equilibria Analysis for the case of constant contact rate}
Consider the case when $\beta(t)=\bar{\beta}$, a positive constant. The model \eqref{model} has a equilibrium 
\begin{align}\mathcal{E}^d=(S^d,E_n^d,E_{n-1}^d, \cdots ,E_1^d,I^d,R^d)=(1,0,0,\cdots,0),\end{align}which is called the disease free equilibrium. For convenience, let \[\Omega_d=\left\{(S,E_n,E_{n-1},\cdots,E_1,I,R)\in\mathbb{R}^{n+3}_{+}:E_n=E_{n-1}=\cdots=E_1=I=R=0 \right\}.\]Now we prove the following properties:
\begin{theorem} \label{thm1}The disease free equilibrium, $\mathcal{E}^d$, of the model \eqref{model} is globally asymptotically stable in the domain $\Omega_d$ if $R_0\le 1$.
\end{theorem}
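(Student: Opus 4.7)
The plan is to construct a linear Lyapunov function $V = \sum_{j=1}^{n} c_j E_j + I$ on $\Omega$ with strictly positive weights chosen so that the coefficients of $E_1,\ldots,E_n$ in $\dot V$ cancel. Differentiating $V$ along \eqref{model} and reindexing the sums $\sum_{j=1}^{n-1} c_j\bar p_j\lambda_{j+1} E_{j+1}$ and $\sum_{i=0}^{n-1} p_i\lambda_{i+1} E_{i+1}$ so that each $E_k$ appears only once, one obtains
\[
\dot V = \bigl[c_n\bar\beta S - (\gamma+\mu)\bigr] I + \bigl[\lambda_1 - c_1(\lambda_1+\mu)\bigr] E_1 + \sum_{j=2}^{n}\bigl[c_{j-1}\bar p_{j-1}\lambda_j + p_{j-1}\lambda_j - c_j(\lambda_j+\mu)\bigr] E_j.
\]
Forcing each $E_j$-bracket to vanish produces the forward recursion $c_1 = \lambda_1/(\lambda_1+\mu)$ and $c_j = \lambda_j(\bar p_{j-1}c_{j-1}+p_{j-1})/(\lambda_j+\mu)$ for $j = 2,\ldots,n$; a short induction shows that every $c_j$ lies in $(0,1)$.

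The next step is to identify the terminal weight $c_n$ with $(\gamma+\mu)R_0/\bar\beta$. The probabilistic reading of the recursion is clean: $c_j$ is the probability that a particle starting in phase $E_j$ survives the competing background mortality $\mu$ long enough to be absorbed into $I$, so $c_n$ is the probability that a newly exposed individual eventually becomes infectious, that is, $c_n = [\mathcal L f_X](\mu) = 1-\mu\hat P$. This is precisely the quantity already computed in the derivation of $R_0$ preceding \eqref{r0}, which gives $1-\mu\hat P = (\gamma+\mu)R_0/\bar\beta$. Substituting,
\[
\dot V = (\gamma+\mu)(R_0 S - 1) I,
\]
and since $0\le S\le 1$ on $\Omega$ and $R_0\le 1$, we obtain $\dot V \le (\gamma+\mu)(R_0-1) I \le 0$.

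Finally, I would invoke LaSalle's invariance principle on the compact positively invariant set $\Omega$. Along any invariant trajectory contained in $\{\dot V = 0\}$ we must have either $I\equiv 0$ (the sole possibility when $R_0<1$) or $S\equiv 1$ (additionally available when $R_0=1$). If $I\equiv 0$, then $\du I/\du t\equiv 0$ combined with the $I$-equation of \eqref{model} forces $\sum_{i=0}^{n-1} p_i\lambda_{i+1} E_{i+1}\equiv 0$, hence every $E_j\equiv 0$; the residual system $\du S/\du t = \mu(1-S)$, $\du R/\du t = -\mu R$ has $\mathcal E^d$ as its only bounded invariant orbit. If instead $S\equiv 1$, the $S$-equation forces $\bar\beta I\equiv 0$ and we are back to the previous case. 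Therefore the largest invariant set in $\{\dot V=0\}$ is $\{\mathcal E^d\}$, and LaSalle yields global asymptotic stability of $\mathcal E^d$ on $\Omega$.

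The main obstacle is the identification $c_n = (\gamma+\mu)R_0/\bar\beta$. The cleanest path is the probabilistic one outlined above, matching $c_n$ with $[\mathcal L f_X](\mu)$ directly; a purely algebraic alternative is induction on $n$, unrolling the recursion and recognizing the resulting monomials as the explicit summands $a_i\prod_{k=1}^{n-i}\lambda_{n+1-k}/(\lambda_{n+1-k}+\mu)$ appearing in \eqref{r0}. Once this identity is in hand, the remainder is a standard Lyapunov--LaSalle exercise.
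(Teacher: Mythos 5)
Your proposal is correct and follows essentially the same route as the paper: the same linear Lyapunov function $\sum_j w_j E_j + I$ with weights chosen so the $E_j$-coefficients telescope to zero via the recursion $w_j(\lambda_j+\mu)=w_{j-1}\bar p_{j-1}\lambda_j+p_{j-1}\lambda_j$, followed by $\dot V=(\gamma+\mu)(R_0S-1)I\le 0$ and LaSalle. The only differences are presentational --- you define the weights by the forward recursion and identify $c_n=1-\mu\hat P=(\gamma+\mu)R_0/\bar\beta$ probabilistically (or by unrolling), whereas the paper writes the $w_j$ in closed form and verifies the recursion algebraically; your LaSalle step is in fact slightly more careful about the $R_0=1$, $S\equiv 1$ case.
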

\begin{proof}Consider the function 
\begin{align*}
V=\sum_{j=1}^n w_j E_j+I, \end{align*}
for positive constants $w_j$, for $j=1,\cdots,n$, where
\begin{align*}
w_j=\sum_{l=1}^j\alpha_l^j\prod_{k=l}^j\cfrac{\lambda_k}{\lambda_k+\mu},
\end{align*}
with \[\alpha_l^j=\begin{cases}p_{j-1}&\textrm{if $l=j$}\\ 
p_{l-1}\prod_{\xi=l}^{j-1}\bar{p}_{\xi}&\textrm{otherwise, i.e. }l=1,2,\cdots,j-1. \end{cases}\]for fixed $j$. Then, we can note that $w_n={(\gamma+\mu)R_0}/{\bar{\beta}}$.
The derivative of $V$ to time $t$ is given by
\begin{align}\label{global1}
\begin{split}
\cfrac{\du V}{\du t}&=\sum_{j=1}^nw_j\cfrac{\du E_j}{\du t}+\cfrac{\du I}{\du t},\\
&=w_n\left[\bar{\beta}S(t)I(t)-(\lambda_n+\mu)E_n(t)\right]+\sum_{j=1}^{n-1}w_j\left[\bar{p}_j\lambda_{j+1}E_{j+1}(t)-(\lambda_j+\mu)E_j(t)\right]\\&~~~+\sum_{j=0}^{n-1}p_j\lambda_{j+1}E_{j+1}(t)-(\gamma+\mu)I(t),\\
&=w_n\bar{\beta}S(t)I(t)-(\gamma+\mu)I(t)+\sum_{j=2}^{n}\left[-w_j(\lambda_j+\mu)+w_{j-1}\bar{p}_{j-1}\lambda_j+p_{j-1}\lambda_j\right]E_j(t)\\&~~~+[p_0\lambda_1-w_1(\lambda_1+\mu)]E_1(t),
\end{split}
\end{align}by rearrangement. Moreover, since 
\begin{align}\label{global2}p_0\lambda_1-w_1(\lambda_1+\mu)=p_0\lambda_1-p_0\lambda_1=0,\end{align}
and
\begin{align*}
w_j&=\sum_{l=1}^j a_l^j \prod_{k=l}^j \cfrac{\lambda_k}{\lambda_k+\mu},\\
&=a_1^j\prod_{k=1}^j\cfrac{\lambda_k}{\lambda_k+\mu}+a_2^j\prod_{k=2}^j\cfrac{\lambda_k}{\lambda_k+\mu}+\cdots+a_j^j\cfrac{\lambda_j}{\lambda_j+\mu},\\
&=\bar{p}_{j-1}\cfrac{\lambda_j}{\lambda_j+\mu}\underbrace{\left[a_1^{j-1}\prod_{k=1}^{j-1}\cfrac{\lambda_k}{\lambda_k+\mu}+a_2^{j-1}\prod_{k=2}^{j-1}\cfrac{\lambda_k}{\lambda_k+\mu}+\cdots+a_{j-1}^{j-1}\cfrac{\lambda_{j-1}}{\lambda_{j-1}+\mu}\right]}_{=\sum_{l=1}^{j-1}a_{l}^{j-1}\prod_{k=l}^{j-1}\frac{\lambda_k}{\lambda_k+\mu}=w_{j-1}}\\&~~~+a_j^j\cfrac{\lambda_j}{\lambda_j+\mu}, \textrm{~~~since $a_k^{j}=\bar{p}_{j-1}\cdot a_{k}^{j-1}$,}\\
&=\cfrac{\lambda_j}{\lambda_j+\mu}\left[\bar{p}_{j-1}w_{j-1} +p_{j-1}\right],
\end{align*}and so we can get the recurrence relation: 
\begin{align}\label{global3} w_j(\lambda_j+\mu)=w_{j-1}\bar{p}_{j-1}\lambda_j+p_{j-1}\lambda_j,
\end{align}
for each $j=2,3,\cdots,n$. Thus, if we substitute \eqref{global2} and \eqref{global3} into \eqref{global1}, we may get
\begin{align*}
\cfrac{\du V}{\du t}&=w_n\bar{\beta}SI-(\gamma+\mu)I+\sum_{j=2}^{n}\underbrace{\left[-w_j(\lambda_j+\mu)+w_{j-1}\bar{p}_{j-1}\lambda_j+p_{j-1}\lambda_j\right]}_{\text{$=0$ by \eqref{global3}}}E_j\\&~~~+\underbrace{[p_0\lambda_1-w_1(\lambda_1+\mu)]}_{\text{=0 by \eqref{global2}}}E_1,\\
&=w_n \bar{\beta}SI-(\gamma+\mu)I,\\
&\le w_n\bar{\beta}I-(\gamma+\mu)I \textrm{ ~~~since $S\le1$~(equality holds iff $S=1$)},\\
&=(\gamma+\mu)\left( \underbrace{w_n\cfrac{\bar{\beta}}{\gamma+\mu}}_{=R_0}-1 \right)I,\\
&=(\gamma+\mu)(R_0-1)I\le 0 \textrm{~~~ if $R_0\le1$}.
\end{align*}
Therefore, $V'\le 0$ if $R_0\le 1$ and $V'=0$ if and only if $E_n=E_{n-1}=\cdots=E_1=I=0$. Also, substituting $I=0$ into the equations for $S$ and $R$ shows that $S\rightarrow 1$ and $R\rightarrow 0$ as $t\rightarrow \infty$. Hence, $V$ is the Lyapunov function in the domain $\Omega_d$. From the LaSalle's invariance principle, the disease free equilibrium $\mathcal{E}^d$ is globally asymptotically stable in $\Omega_d$ if $R_0\le 1$. 
\end{proof}

Next we consider the endemic equilibrium, $\mathcal{E}^\ast=(S^\ast,E_n^\ast,\cdots,E_1^\ast,I^\ast,R^\ast)$, whose components are all positive. Solving the algebraic system in \eqref{model} at fixed point gives
\begin{subequations}
\begin{align}
\bar{\beta}S^\ast I^\ast &= \mu(1-S^\ast), \label{al1}\\
\bar{\beta}S^\ast I^\ast &= (\lambda_n+\mu)E_n^\ast,\label{al2} \\
E_i^\ast &=\cfrac{1}{\lambda_i+\mu}\,\bar{p}_i\lambda_{i+1}E_{i+1}^\ast\textrm{~~~for $i=n-1,\cdots,2,1$},\label{al3}\\
I^\ast&=\cfrac{1}{\gamma+\mu}\sum_{i=0}^{n-1}p_i\lambda_{i+1}E_{i+1}^\ast,=\cfrac{1}{\gamma+\mu}\,\bar{\beta} S^\ast I^\ast (1-\mu \hat{P})\label{al4}\\
R^\ast&=\cfrac{\gamma}{\mu} \,I^\ast.\label{al5}
\end{align}
\end{subequations} From \eqref{al1} and \eqref{al2}, 
\begin{subequations}\label{globale}
\begin{align}\label{globale1}E_n^\ast=\mu(1-S^\ast)\,\cfrac{1}{\lambda_n+\mu},\end{align} and from \eqref{al3}, we get the $E_i^\ast$ for $i=n-1,n-2,\cdots,1$ recursively by 
\begin{align}\label{globale2}E_j^\ast=\mu(1-S^\ast)\, \cfrac{1}{\lambda_j+\mu} \prod_{k=j+1}^n \cfrac{\bar{p}_{k-1}\lambda_k}{\lambda_k+\mu},\textrm{~~~for } j=n-1,\cdots,1.\end{align}
\end{subequations} Substituting \eqref{globale} into \eqref{al4} yields
\begin{align}
\begin{split}
I^\ast&=\mu(1-S^\ast)\,\cfrac{1}{\gamma+\mu}\sum_{i=0}^{n-1}p_i\lambda_{i+1}E_{i+1}^\ast,\\
&=\mu(1-S^\ast)\,\cfrac{1}{\gamma+\mu}\left[ p_0\cfrac{\lambda_1}{\lambda_1+\mu}\prod_{k=2}^n\cfrac{\bar{p}_{k-1}\lambda_k}{\lambda_k+\mu} + p_1\cfrac{\lambda_2}{\lambda_2+\mu}\prod_{k=3}^n\cfrac{\bar{p}_{k-1}\lambda_k}{\lambda_k+\mu}+\cdots + p_{n-1}\cfrac{\lambda_n}{\lambda_n+\mu}\right],\\
&=\mu(1-S^\ast)\,\cfrac{1}{\gamma+\mu}\,\sum_{i=0}^{n-1}\left[a_i \prod_{k=i+1}^n \cfrac{\lambda_k}{\lambda_k+\mu}\right],\\
&=\mu(1-S^\ast)\,\cfrac{1}{\gamma+\mu}\,\sum_{i=0}^{n-1} \left[ a_i \prod_{k=1}^{n-i} \cfrac{\lambda_{n+1-k}}{\lambda_{n+1-k}+\mu}\right]\\
&=\cfrac{\mu}{\bar{\beta}}\,(1-S^\ast)\,R_0.\label{globalee}
\end{split}
\end{align}Moreover, if we substitute \eqref{globalee} into \eqref{al1}, then 
\begin{align*}
I^\ast=\cfrac{\mu}{\bar{\beta}}\,(R_0-1),
\end{align*}
 since $S^\ast \neq 1$. Hence $I^\ast>0$ if $R_0>1$. Also, if $I^\ast>0$, then $0<S^\ast=1/R_0<1$ and $R^\ast>0$ from \eqref{al5}. Moreover, $E_i$'s are positive since $1-S^\ast>0$. Conversely, if $R_0\le 1$, then the model has no positive equilibrium. These results are summarized below.
 \begin{theorem}
 The endemic equilibrium $\mathcal{E}^\ast$ of the model \eqref{model} exists uniquely when $R_0>1$, and no endemic equilibrium otherwise.
 \end{theorem}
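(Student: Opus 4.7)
The plan is to exploit the fact that the equilibrium equations form a cascade: once $S^\ast$ is fixed, every other component is determined by a finite recursion, and the closing constraint on $I^\ast$ yields a single scalar equation whose solution pins down $S^\ast$. So the strategy is existence-by-construction plus a direct verification that the construction gives positive components exactly when $R_0 > 1$.

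First I would set the right-hand sides of \eqref{model} to zero, producing the algebraic system \eqref{al1}--\eqref{al5}. From \eqref{al1} I read off $\bar{\beta}S^\ast I^\ast = \mu(1-S^\ast)$, which gives $E_n^\ast$ in terms of $S^\ast$ via \eqref{al2}. Then I would iterate the recursion \eqref{al3} downward from $i=n-1$ to $i=1$, obtaining the closed-form product expression for each $E_j^\ast$ displayed in \eqref{globale2}. Substituting these into the sum in \eqref{al4} and re-indexing the product (to match $\lambda_{n+1-k}/(\lambda_{n+1-k}+\mu)$) shows that the coefficient of $\mu(1-S^\ast)/(\gamma+\mu)$ is exactly the Laplace-transform sum appearing in \eqref{r0}, namely $R_0 \cdot (\gamma+\mu)/\bar{\beta}$. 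This yields the key identity $I^\ast = (\mu/\bar{\beta})(1-S^\ast)R_0$.

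Next I would combine this with \eqref{al1} rewritten as $\bar{\beta}S^\ast I^\ast = \mu(1-S^\ast)$. Assuming $S^\ast \neq 1$ (i.e.\ a genuine endemic state rather than the disease-free equilibrium), I can cancel $1-S^\ast$ and obtain $S^\ast = 1/R_0$, which then forces $I^\ast = (\mu/\bar{\beta})(R_0-1)$ and, via \eqref{al5}, $R^\ast = (\gamma/\mu)I^\ast$. Uniqueness is then immediate, since every component is a deterministic function of $R_0$ and the model parameters.

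Finally I would verify positivity: if $R_0 > 1$ then $S^\ast = 1/R_0 \in (0,1)$, so $1-S^\ast > 0$, giving $I^\ast > 0$, $R^\ast > 0$, and $E_j^\ast > 0$ for all $j$ (each is a positive multiple of $1-S^\ast$). Conversely, if $R_0 \le 1$, any positive endemic solution would force $I^\ast = (\mu/\bar{\beta})(R_0-1) \le 0$, a contradiction, so no endemic equilibrium exists. There is no real obstacle here beyond bookkeeping; the only mildly delicate point is the index re-shuffling that recognizes the coefficient in the $I^\ast$ formula as exactly $R_0\,(\gamma+\mu)/\bar{\beta}$, and that has already been set up by the explicit formula \eqref{r0} together with the definition of $a_i$, so it reduces to matching two products term by term.
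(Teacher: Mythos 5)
Your proposal is correct and follows essentially the same route as the paper: solving the equilibrium cascade \eqref{al1}--\eqref{al5} to obtain $I^\ast=(\mu/\bar{\beta})(1-S^\ast)R_0$, cancelling $1-S^\ast$ to get $S^\ast=1/R_0$ and $I^\ast=(\mu/\bar{\beta})(R_0-1)$, and reading off existence, uniqueness, and positivity from the sign of $R_0-1$. No substantive differences from the paper's argument.
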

Finally, we claim the following:
\begin{theorem} \label{thm3} The endemic equilibrium, $\mathcal{E}^\ast$, of the model \eqref{model} is globally asymptotically stable in the interior of $\Omega$ if $R_0>1$.
\end{theorem}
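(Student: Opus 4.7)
The plan is to construct a Goh--Volterra Lyapunov function
\[
V = c_S V_S + \sum_{j=1}^n c_j V_{E_j} + c_I V_I, \qquad V_X := X - X^\ast - X^\ast \ln(X/X^\ast),
\]
with positive weights $c_S, c_1, \ldots, c_n, c_I$ determined below, and then to apply LaSalle's invariance principle. Each summand is non-negative since $g(x) := x - 1 - \ln x \ge 0$ with $g(1) = 0$; hence $V \ge 0$ and $V = 0$ only at $\mathcal{E}^\ast$. The $R$--variable is not needed in $V$ because the $R$--equation is a stable linear relaxation driven by $I$.

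Differentiating $V$ along \eqref{model} and invoking the equilibrium identities \eqref{al1}--\eqref{al4}, the $S$--equation yields a dissipative term $-c_S \mu (S - S^\ast)^2/S$ together with cross-terms in the normalised ratios $x = S/S^\ast$, $y_j = E_j/E_j^\ast$, $z = I/I^\ast$. I would then select the weights so that every linear coefficient in these ratios cancels. Setting $c_I = 1$, balancing the $y_j$--coefficients forces the recursion
\[
c_j (\lambda_j + \mu) = \lambda_j\bigl[\bar p_{j-1}\, c_{j-1} + p_{j-1}\bigr],\qquad j = 2,\ldots,n, \qquad c_1 = \frac{\lambda_1}{\lambda_1 + \mu};
\]
the coefficient of $xz$ vanishes iff $c_S = c_n$, and the coefficient of $z$ vanishes iff $c_n \bar\beta S^\ast I^\ast = (\gamma + \mu) I^\ast$, i.e.\ $c_n = 1 - \mu \hat P$. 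Unrolling the recursion and comparing with \eqref{laplace}--\eqref{r0} shows that the value produced by the recursion is precisely $1 - \mu \hat P$; this algebraic compatibility is the step that makes the whole construction work.

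With the weights so fixed, $\dot V$ reduces to $-c_S \mu (S - S^\ast)^2/S$ plus a non-positive combination of expressions of the form $k - \sum(\text{ratio products})$, where each collection of ratio products multiplies to $1$ around the associated closed cycle $S \to E_n \to E_{n-1} \to \cdots \to E_1 \to I \to S$ (plus the direct bypass cycles through $E_{j+1} \to I$). By AM--GM (equivalently, $g \ge 0$), each such expression is $\le 0$, hence $\dot V \le 0$ on the interior of $\Omega$, with equality only on the invariant set $\{x = 1,\; xz = y_n,\; y_{j+1} = y_j\ (1 \le j < n),\; y_1 = z\}$. Back-substitution into \eqref{model} collapses this set to the single point $\mathcal{E}^\ast$, so by LaSalle's invariance principle applied on the compact positively invariant set $\Omega$, the endemic equilibrium is globally asymptotically stable in the interior of $\Omega$ whenever $R_0 > 1$.

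The main obstacle is the weight-selection step: because each exposed compartment loses mass to mortality $\mu E_j$ and simultaneously admits a direct bypass to $I$ with probability $p_{j-1}$, the chain is not flux-conservative, so the graph-theoretic Lyapunov recipe of Guo--Li--Shuai does not apply verbatim. One must verify that the single recurrence above is simultaneously consistent with the balance at every intermediate $E_j$--node \emph{and} at the terminal $I$--node; this reduces to the identification of $c_n$ with the survival probability $1 - \mu \hat P$ that drives the definition of $R_0$ in \eqref{r0}. Once this identification is made, the remaining argument is a standard AM--GM plus LaSalle computation.
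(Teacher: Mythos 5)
Your proposal is correct in its essentials, but it takes a genuinely different route from the paper. The paper never builds a Lyapunov function on the exposed compartments at all: it returns to the distributed-delay representation of the model, writes $\du I/\du t=\int_0^\infty \bar{\beta}S(t-u)I(t-u)e^{-\mu u}f_X(u)\,\du u-(\gamma+\mu)I$, and uses a Volterra-type functional $V_e=\hat{f}(0)G(S/S^\ast)+I^\ast G(I/I^\ast)+\bar{\beta}S^\ast I^\ast\int_0^\infty\hat{f}(u)G\bigl(S(t-u)I(t-u)/(S^\ast I^\ast)\bigr)\du u$ with $\hat{f}(t)=\int_t^\infty e^{-\mu u}f_X(u)\du u$; the whole Coxian chain is absorbed into the kernel, and the derivative reduces to $-\hat{f}(0)\mu(S-S^\ast)^2/S$ minus integrals of $G$-terms. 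You instead stay entirely in the finite-dimensional ODE system and build the Goh--Volterra function with one logarithmic term per compartment, which forces you to do the weight bookkeeping that the paper avoids. Your structural claims check out: the linear terms in $y_j=E_j/E_j^\ast$ and $z=I/I^\ast$ cancel exactly when the weights satisfy $c_1=\lambda_1/(\lambda_1+\mu)$ and $c_j(\lambda_j+\mu)=\lambda_j[\bar{p}_{j-1}c_{j-1}+p_{j-1}]$ (this is precisely the recurrence \eqref{global3} from the paper's proof of \autoref{thm1}, so $c_j=w_j$ and $c_n=w_n=(\gamma+\mu)R_0/\bar{\beta}=1-\mu\hat{P}$, which is exactly the terminal consistency condition you need at the $I$-node); the residual ratio products organize into the cycles $S\to E_n\to\cdots\to E_j\to I\to S$ with cycle weights $\kappa_j=p_{j-1}\lambda_jE_j^\ast$, whose partial sums $\sum_{j\le k}\kappa_j=c_k(\lambda_k+\mu)E_k^\ast$ match the edge coefficients by the same recurrence, so the AM--GM step closes. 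What each approach buys: the paper's functional is compact and sidesteps the weight algebra entirely, but it requires manipulating the history integral (and implicitly a solution defined on a past interval, plus an integration by parts in $\du\hat{f}$); your version avoids all delay machinery and makes the link to the DFE Lyapunov weights of \autoref{thm1} explicit, at the price of the cycle-decomposition verification. One small point to tighten if you write it out: your equality set $\{x=1,\ xz=y_n,\ y_{j+1}=y_j,\ y_1=z\}$ is a one-parameter family ($y_j=z=\theta$ for any $\theta$), and it is only the \emph{largest invariant subset} that collapses to $\mathcal{E}^\ast$ --- you need the $S$-equation with $S\equiv S^\ast$ to force $I=I^\ast$ before LaSalle applies.
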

\begin{proof}
For convenience, define $$\hat{f}(t):=\int_t^\infty \exp(-\mu u)f_X(u) \,\du u \in (0,1),$$where $f_X(x)=-P'(x)$. Then, $1-\mu\hat{P}=\hat{f}(0)$.
Motivated by \cite{wang2014stability}, consider a Lyapunov function $V_e\equiv V_e(S,I)$ as \[V_e = V_{1e}+V_{2e},\]where  
\begin{align*}
V_{1e}&=\hat{f}(0)\cdot G\left(\cfrac{S}{S^\ast}\right)+I^\ast\cdot G\left(\cfrac{I}{I^\ast}\right),\\
V_{2e}&=\bar{\beta}S^\ast I^\ast \int_0^\infty \hat{f}(u) \cdot G\left(\cfrac{S(t-u)I(t-u)}{S^\ast I^\ast}\right)\du u,
\end{align*}
and $G(x)=x-1-\ln x$. Notice that $V_e=0$ when $S=S^\ast, E_n=E_n^\ast,\cdots, R=R^\ast$ and $V_e>0$ otherwise. Differentiating $V_{1e}$ and $V_{2e}$ with respect to the time $t$ yields
\begin{align}\label{v1e}
\begin{split}
\cfrac{\du V_{1e}}{\du t}&= \hat{f}(0)\left(1-\cfrac{S^\ast}{S(t)}\right)\left[\mu-\bar{\beta}S(t)I(t)-\mu S(t)\right]\\&~~~ +\left(1-\cfrac{I^\ast}{I(t)}\right)\left[\int_0^\infty\bar{\beta}S(t-u)I(t-u)\exp(-\mu u)f_X(u)\,\du u-(\gamma+\mu)I(t)\right],\\
&=-\hat{f}(0) \cdot\mu\,\cfrac{(S(t)-S^\ast)^2}{S(t)}+\hat{f}(0)\cdot\bar{\beta}S^\ast I^\ast -\hat{f}(0)\cdot\bar{\beta}S(t)I(t) -\hat{f}(0)\cdot\bar{\beta} \,\cfrac{{S^\ast}^2}{S(t)}\, I^\ast + \hat{f}(0)\cdot\bar{\beta} S^\ast I(t)\\
&~~~+\int_0^\infty\bar{\beta}S(t-u) I(t-u)\exp(-\mu u)f_X(u)\,\du u -\cfrac{I(t)}{I^\ast}\int\bar{\beta} S^\ast I^\ast\exp(-\mu u)f_X(u) \du u \\
&~~~-\cfrac{I^\ast}{I(t)}\,\int_0^\infty\bar{\beta}S(t-u) I(t-u)\exp(-\mu u)f_X(u)\,\du u+\int_0^\infty \bar{\beta}S^\ast I^\ast \exp(-\mu u)f_X(u)\,\du u,
\end{split}
\end{align}
from the relation \eqref{al1} and \eqref{al4}, and,
\begin{align}\label{v2e}
\begin{split}
\cfrac{\du V_{2e}}{\du t} &= \cfrac{\du}{\du t}\,\,\bar{\beta}S^\ast I^{\ast} \int_0^\infty\hat{f}(u)\cdot G\left(\cfrac{S(t-u)I(t-u)}{S^\ast I^\ast}\right)\,\du u,\\
&=\hat{f}(0)\cdot\bar{\beta}S^\ast I^\ast \cdot G\left(\cfrac{S(t)I(t)}{S^\ast I^\ast}\right)-\bar{\beta}S^\ast I^\ast\int_0^\infty G\left(\cfrac{S(t-u)I(t-u)}{S^\ast I^\ast}\right)\,\du \hat{f}(u) ,\\
&=\hat{f}(0)\cdot\bar{\beta}S^\ast I^\ast \cdot G\left(\cfrac{S(t)I(t)}{S^\ast I^\ast}\right)-\bar{\beta}S^\ast I^\ast\int_0^\infty G\left(\cfrac{S(t-u)I(t-u)}{S^\ast I^\ast}\right)\exp(-\mu u)f_X(u)\,\du u ,\\
&=\hat{f}(0)\cdot\bar{\beta}\left(S(t)I(t)-S^\ast I^\ast-S^\ast I^\ast \cdot \ln \cfrac{S(t)I(t)}{S^\ast I^\ast}\right)\\
&~~~-\int_0^\infty  \left[\bar{\beta}S(t-u)I(t-u)-\bar{\beta}S^\ast I^\ast-\bar{\beta}S^\ast I^\ast\cdot \ln\cfrac{S(t-u)I(t-u)}{S^\ast I^\ast}\right]\exp(-\mu u)f_X(u)\,\du u,\\
&=\hat{f}(0)\cdot\left(\bar{\beta}S(t)I(t)-\bar{\beta}S^\ast I^\ast \cdot \ln \cfrac{S(t)I(t)}{S^\ast I^\ast}\right)\\
&~~~-\int_0^\infty \left[\bar{\beta}S(t-u)I(t-u)-\bar{\beta}S^\ast I^\ast\cdot \ln\cfrac{S(t-u)I(t-u)}{S^\ast I^\ast}\right]\exp(-\mu u)f_X(u)\,\du u.
\end{split}
\end{align}Combining \eqref{v1e} and \eqref{v2e}, we can get
\begin{align*}
\cfrac{\du V_e}{\du t} &= \cfrac{\du V_{1e}}{\du t}+\cfrac{\du V_{2e}}{\du t},\\
&= -\hat{f}(0) \cdot\mu\,\cfrac{(S(t)-S^\ast)^2}{S(t)}-\hat{f}(0)\cdot \bar{\beta}S^\ast I^\ast \left[\cfrac{S^\ast}{S(t)}-2\right]\\&~~~-\int_0^\infty \bar{\beta}S^\ast I^\ast  \left[\cfrac{S(t-u)I(t-u)}{S^\ast I(t)} -\ln\cfrac{S(t-u)I(t-u)}{S(t)I(t)}\right]\exp(-\mu u)f_X(u)\, \du u,\\
&= -\hat{f}(0) \cdot\mu\,\cfrac{(S(t)-S^\ast)^2}{S(t)}-\hat{f}(0)\cdot \bar{\beta}S^\ast I^\ast \cdot G\left(\cfrac{S\ast}{S(t)}\right)+\hat{f}(0)\cdot \bar{\beta}S^\ast I^\ast \left[1-\ln \cfrac{S^\ast}{S(t)} \right]\\
&~~~-\int_0^\infty \bar{\beta}S^\ast I^\ast \cdot G\left(\cfrac{S(t-u)I(t-u)}{S^\ast I(t)}\right)\exp(-\mu u)f_X(u) \,\du u\\
&~~~-\underbrace{\int_0^\infty \bar{\beta}S^\ast I^\ast \left[ 1+\ln\cfrac{S(t-u)I(t-u)}{S^\ast I(t)}-\ln\cfrac{S(t-u)I(t-u)}{S(t)I(t)} \right]\exp(-\mu u)f_X(u)\,\du u,}_{=-\int_0^\infty \bar{\beta}S^\ast I^\ast \exp(-\mu u)f_X(u)[1-\ln\frac{S^\ast}{S(t)}]\,\du u =-\hat{f}(0)\cdot \bar{\beta}S^\ast I^\ast \left[1-\ln \frac{S^\ast}{S(t)}\right],}
\\
&= -\hat{f}(0) \cdot\mu\,\cfrac{(S(t)-S^\ast)^2}{S(t)}-\hat{f}(0)\cdot \bar{\beta}S^\ast I^\ast \cdot G\left(\cfrac{S\ast}{S(t)}\right)\\
&~~~-\int_0^\infty \bar{\beta}S^\ast I^\ast  \cdot G\left(\cfrac{S(t-u)I(t-u)}{S^\ast I(t)}\right) \exp(-\mu u)f_X(u)\,\du u,\\
&\le 0,
\end{align*}since $G(x)\ge 0$ for all $x>0$ and so the integrand of the last term are positive. Note $V_e'=0$ only if at the endemic equilibrium and $V_e'<0$ otherwise. Therefore, by the LaSalle's principle, every solution to the equation in the model \eqref{model} approaches the endemic equilibrium for $R_0>1$. Thus the endemic equilibrium $\mathcal{E}^\ast$ is globally asymptotically stable if $R_0>1$.
\end{proof}
\autoref{thm1} and \autoref{thm3} indicate that the disease could be eliminated by maintaining the basic reproduction number less than unity, and conversely, when $R_0$ is greater than unity, the disease persists in the epidemiological point of view.

\section{An application}
The 2009 epidemic of influenza H1N1 in Canada is investigated to explain the procedure from fitting incubation period to applying to our model. The procedure is summarized: approximating empirical distribution to Coxian distribution, and investigating a Coxian distributed SEIR model.

First, we approximate the empirical distribution of incubation period to Coxian distribution, $1-P$, as denoted in \eqref{survival}. Empirical data of incubation period are captured from \cite{tuite2010estimated} and the data are fit to the Exponential and Coxian, respectively. Unfortunately, there is no criterion for choosing the number of substates $n$. However, Akaike Information Criterion, AIC, gives the relative quality of statistical models for a given set of data \cite{akaike1998information} and so we choose $n$ that makes the AIC smallest. Next, the human case data are fit to Coxian distributed SEIR model. The report \cite{hsieh2010epidemic} is referred to obtaining daily human case data. We normalized the model \eqref{model} as multiplying the human population. Since the duration of epidemic is short related to human lifespan, we ignore the demographic effect, and so $\mu=0$ is assumed. Since the rate of infection-related death is too small to be considered, we don't consider the death from the epidemic. Motivated by \cite{tuite2010estimated}, the mean duration of infectiousness $1/\gamma$ is assumed as $7.1$. We consider the initial time ($t=0$) as April 14 when the epidemic started. Since there are two turning points in April 29 and June 4 in the duration of epidemic, we assume the transmission rate $\beta(t)$ consists of step function:
\[\beta(t)=\begin{cases}\beta_0&\textrm{ on }t\in[0,15),\\
\beta_1&\textrm{ on }t\in[15,53),\\
\beta_2&\textrm{ on }t\in[53,\infty),\\
\end{cases}\] and estimate $\beta(t)$ by fitting between the case data and the cumulative prevalence, $\int_0^t\left\{\sum_{i=0}^{n-1}p_i\lambda_{i+1}E_{i+1}(u)\right\} \du u$, from the model as minimizing the sum of squared residual with the corresponding data.

 Figure \ref{fig2a}(a) shows the result of empirical distribution of incubation period to Coxian distributions of 12-chains, which is determined by AIC(corresponding AIC$=-36.3$). We see that the fit curve explain the distribution of empirical incubation period well. Figure \ref{fig2b}(b) indicates the result of investigating Coxian distributed SEIR model. Well fit values of set of parameters are $\beta_0=0.642$, $\beta_1=0.252$, and $\beta_2=0.131$ and so the basic reproduction number \eqref{r0}, $R_0(=\beta_0/\gamma)$ is $4.561$.
To compare the result with the classical model, we fit the empirical distribution of incubation period to exponential and investigate the classical exponential SEIR model. In this case, AIC value of distribution fit is $-33.3$. Exponential model gives $\beta_0=0.584$, $\beta_1=0.234$ and $\beta_2=0.133$ and so the $R_0$ is $4.146$. From the example, we can see that investigating Coxian distributed SEIR model might give such a nice fit results, and moreover, $R_0$ can vary greatly as considering the realistic distribution. This result supports the fact that common assumption for exponentially distributed incubation period always underestimate the basic reproductive number of infection from onset data and considering a realistic incubation period distribution is important to modeling epidemics (\cite{wearing2005appropriate}).

\section{Discussion and Conclusion}
Many previous works have strongly emphasized that modelers should be cautious for considering the intrinsic facts to classical frameworks when the epidemic models for public health is proposed. In the study, we have derived an SEIR model based on Coxian distribution which approximates the distribution of the incubation period. In mathematical analysis, we proved that global stability of equilibria with respect to the value of $R_0$. The novelty of the model is that all types of distributions of the incubation period are fit with Coxian distribution and in addition, since it has realistic distribution of incubation period, it enables to describe the SEIR model of a particular type of distribution, like bimodal, that is not expressed by the conventional way, such as {\it P. vivax} malaria in temperate regions. Moreover, we may extend the Coxian distributed model not just incubation period, but infectious period. The model has some limitations even if several benefits. First, a number of the parameters are needed to fit empirical data in which they brings computation works. Secondly, loss of biological meaning could be yielded because of going out to absorption state without going through the whole chain of incubation. However, we believe that the model may be applicable for resolving the problem caused by simplicity of the models in some cases when sufficient empirical information of the incubation period is given, as shown in the previous section.


\section*{Availability of data and material}
The datasets analyzed during the current study are available in \cite{tuite2010estimated}.

\section*{Competing interests}
The authors declare that they have no competing interests.
  
\section*{Funding}
Not applicable.

\section*{Author's contributions}
IHJ conceptualized the roles and designed the studies. SK did mathematical analysis of the model and illustrated an application. SK and JHB discussed and analyzed all results.  All members verified the results and wrote and reviewed the manuscript.

\bibliographystyle{manuscript_arxiv}
\bibliography{manuscript_arxiv.bib}


\begin{thebibliography}{22}
\ifx \bisbn   \undefined \def \bisbn  #1{ISBN #1}\fi
\ifx \binits  \undefined \def \binits#1{#1}\fi
\ifx \bauthor  \undefined \def \bauthor#1{#1}\fi
\ifx \batitle  \undefined \def \batitle#1{#1}\fi
\ifx \bjtitle  \undefined \def \bjtitle#1{#1}\fi
\ifx \bvolume  \undefined \def \bvolume#1{\textbf{#1}}\fi
\ifx \byear  \undefined \def \byear#1{#1}\fi
\ifx \bissue  \undefined \def \bissue#1{#1}\fi
\ifx \bfpage  \undefined \def \bfpage#1{#1}\fi
\ifx \blpage  \undefined \def \blpage #1{#1}\fi
\ifx \burl  \undefined \def \burl#1{\textsf{#1}}\fi
\ifx \doiurl  \undefined \def \doiurl#1{\textsf{#1}}\fi
\ifx \betal  \undefined \def \betal{\textit{et al.}}\fi
\ifx \binstitute  \undefined \def \binstitute#1{#1}\fi
\ifx \binstitutionaled  \undefined \def \binstitutionaled#1{#1}\fi
\ifx \bctitle  \undefined \def \bctitle#1{#1}\fi
\ifx \beditor  \undefined \def \beditor#1{#1}\fi
\ifx \bpublisher  \undefined \def \bpublisher#1{#1}\fi
\ifx \bbtitle  \undefined \def \bbtitle#1{#1}\fi
\ifx \bedition  \undefined \def \bedition#1{#1}\fi
\ifx \bseriesno  \undefined \def \bseriesno#1{#1}\fi
\ifx \blocation  \undefined \def \blocation#1{#1}\fi
\ifx \bsertitle  \undefined \def \bsertitle#1{#1}\fi
\ifx \bsnm \undefined \def \bsnm#1{#1}\fi
\ifx \bsuffix \undefined \def \bsuffix#1{#1}\fi
\ifx \bparticle \undefined \def \bparticle#1{#1}\fi
\ifx \barticle \undefined \def \barticle#1{#1}\fi
\ifx \bconfdate \undefined \def \bconfdate #1{#1}\fi
\ifx \botherref \undefined \def \botherref #1{#1}\fi
\ifx \url \undefined \def \url#1{\textsf{#1}}\fi
\ifx \bchapter \undefined \def \bchapter#1{#1}\fi
\ifx \bbook \undefined \def \bbook#1{#1}\fi
\ifx \bcomment \undefined \def \bcomment#1{#1}\fi
\ifx \oauthor \undefined \def \oauthor#1{#1}\fi
\ifx \citeauthoryear \undefined \def \citeauthoryear#1{#1}\fi
\ifx \endbibitem  \undefined \def \endbibitem {}\fi
\ifx \bconflocation  \undefined \def \bconflocation#1{#1}\fi
\ifx \arxivurl  \undefined \def \arxivurl#1{\textsf{#1}}\fi
\csname PreBibitemsHook\endcsname

\bibitem{anderson1992infectious}
\begin{bbook}
\bauthor{\bsnm{Anderson}, \binits{R.M.}},
\bauthor{\bsnm{May}, \binits{R.M.}}:
\bbtitle{Infectious Diseases of Humans: Dynamics and Control}.
\bpublisher{Oxford university press}
(\byear{1992})
\end{bbook}
\endbibitem

\bibitem{kermack1927contribution}
\begin{barticle}
\bauthor{\bsnm{Kermack}, \binits{W.O.}},
\bauthor{\bsnm{McKendrick}, \binits{A.G.}}:
\batitle{A contribution to the mathematical theory of epidemics}.
\bjtitle{Proceedings of the royal society of london. Series A, Containing
  papers of a mathematical and physical character}
\bvolume{115}(\bissue{772}),
\bfpage{700}--\blpage{721}
(\byear{1927})
\end{barticle}
\endbibitem

\bibitem{smith2001global}
\begin{barticle}
\bauthor{\bsnm{Smith}, \binits{H.L.}},
\bauthor{\bsnm{Wang}, \binits{L.}},
\bauthor{\bsnm{Li}, \binits{M.Y.}}:
\batitle{Global dynamics of an \uppercase{SEIR} epidemic model with vertical
  transmission}.
\bjtitle{SIAM Journal on Applied Mathematics}
\bvolume{62}(\bissue{1}),
\bfpage{58}--\blpage{69}
(\byear{2001})
\end{barticle}
\endbibitem

\bibitem{hethcote1980integral}
\begin{barticle}
\bauthor{\bsnm{Hethcote}, \binits{H.W.}},
\bauthor{\bsnm{Tudor}, \binits{D.W.}}:
\batitle{Integral equation models for endemic infectious diseases}.
\bjtitle{Journal of mathematical biology}
\bvolume{9}(\bissue{1}),
\bfpage{37}--\blpage{47}
(\byear{1980})
\end{barticle}
\endbibitem

\bibitem{van2007modeling}
\begin{barticle}
\bauthor{\bparticle{Van~den} \bsnm{Driessche}, \binits{P.}},
\bauthor{\bsnm{Zou}, \binits{X.}}:
\batitle{Modeling relapse in infectious diseases}.
\bjtitle{Mathematical biosciences}
\bvolume{207}(\bissue{1}),
\bfpage{89}--\blpage{103}
(\byear{2007})
\end{barticle}
\endbibitem

\bibitem{mittler1998influence}
\begin{barticle}
\bauthor{\bsnm{Mittler}, \binits{J.E.}},
\bauthor{\bsnm{Sulzer}, \binits{B.}},
\bauthor{\bsnm{Neumann}, \binits{A.U.}},
\bauthor{\bsnm{Perelson}, \binits{A.S.}}:
\batitle{Influence of delayed viral production on viral dynamics in hiv-1
  infected patients}.
\bjtitle{Mathematical biosciences}
\bvolume{152}(\bissue{2}),
\bfpage{143}--\blpage{163}
(\byear{1998})
\end{barticle}
\endbibitem

\bibitem{macdonald2008biological}
\begin{bbook}
\bauthor{\bsnm{MacDonald}, \binits{N.}},
\bauthor{\bsnm{MacDonald}, \binits{N.}}:
\bbtitle{Biological Delay Systems: Linear Stability Theory}
vol. \bseriesno{9}.
\bpublisher{Cambridge University Press}(\byear{2008})
\end{bbook}
\endbibitem

\bibitem{li1995global}
\begin{barticle}
\bauthor{\bsnm{Li}, \binits{M.Y.}},
\bauthor{\bsnm{Muldowney}, \binits{J.S.}}:
\batitle{Global stability for the \uppercase{SEIR} model in epidemiology}.
\bjtitle{Mathematical biosciences}
\bvolume{125}(\bissue{2}),
\bfpage{155}--\blpage{164}
(\byear{1995})
\end{barticle}
\endbibitem

\bibitem{lloyd2001realistic}
\begin{barticle}
\bauthor{\bsnm{Lloyd}, \binits{A.L.}}:
\batitle{Realistic distributions of infectious periods in epidemic models:
  changing patterns of persistence and dynamics}.
\bjtitle{Theoretical population biology}
\bvolume{60}(\bissue{1}),
\bfpage{59}--\blpage{71}
(\byear{2001})
\end{barticle}
\endbibitem

\bibitem{krylova2013effects}
\begin{barticle}
\bauthor{\bsnm{Krylova}, \binits{O.}},
\bauthor{\bsnm{Earn}, \binits{D.J.}}:
\batitle{Effects of the infectious period distribution on predicted transitions
  in childhood disease dynamics}.
\bjtitle{Journal of The Royal Society Interface}
\bvolume{10}(\bissue{84}),
\bfpage{20130098}
(\byear{2013})
\end{barticle}
\endbibitem

\bibitem{safi2011qualitative}
\begin{barticle}
\bauthor{\bsnm{Safi}, \binits{M.A.}},
\bauthor{\bsnm{Gumel}, \binits{A.B.}}:
\batitle{Qualitative study of a quarantine/isolation model with multiple
  disease stages}.
\bjtitle{Applied Mathematics and Computation}
\bvolume{218}(\bissue{5}),
\bfpage{1941}--\blpage{1961}
(\byear{2011})
\end{barticle}
\endbibitem

\bibitem{feng2007epidemiological}
\begin{barticle}
\bauthor{\bsnm{Feng}, \binits{Z.}},
\bauthor{\bsnm{Xu}, \binits{D.}},
\bauthor{\bsnm{Zhao}, \binits{H.}}:
\batitle{Epidemiological models with non-exponentially distributed disease
  stages and applications to disease control}.
\bjtitle{Bulletin of mathematical biology}
\bvolume{69}(\bissue{5}),
\bfpage{1511}--\blpage{1536}
(\byear{2007})
\end{barticle}
\endbibitem

\bibitem{angstmann2017fractional}
\begin{barticle}
\bauthor{\bsnm{Angstmann}, \binits{C.N.}},
\bauthor{\bsnm{Erickson}, \binits{A.M.}},
\bauthor{\bsnm{Henry}, \binits{B.I.}},
\bauthor{\bsnm{McGann}, \binits{A.V.}},
\bauthor{\bsnm{Murray}, \binits{J.M.}},
\bauthor{\bsnm{Nichols}, \binits{J.A.}}:
\batitle{Fractional order compartment models}.
\bjtitle{SIAM Journal on Applied Mathematics}
\bvolume{77}(\bissue{2}),
\bfpage{430}--\blpage{446}
(\byear{2017})
\end{barticle}
\endbibitem

\bibitem{byun2019modeling}
\begin{barticle}
\bauthor{\bsnm{Byun}, \binits{J.H.}},
\bauthor{\bsnm{Jung}, \binits{I.H.}}:
\batitle{Modeling to capture bystander-killing effect by released payload in
  target positive tumor cells}.
\bjtitle{BMC cancer}
\bvolume{19}(\bissue{1}),
\bfpage{194}
(\byear{2019})
\end{barticle}
\endbibitem

\bibitem{melesse2010global}
\begin{barticle}
\bauthor{\bsnm{Melesse}, \binits{D.Y.}},
\bauthor{\bsnm{Gumel}, \binits{A.B.}}:
\batitle{Global asymptotic properties of an seirs model with multiple
  infectious stages}.
\bjtitle{Journal of Mathematical Analysis and Applications}
\bvolume{366}(\bissue{1}),
\bfpage{202}--\blpage{217}
(\byear{2010})
\end{barticle}
\endbibitem

\bibitem{iwami2010global}
\begin{barticle}
\bauthor{\bsnm{Iwami}, \binits{S.}},
\bauthor{\bsnm{Hara}, \binits{T.}}:
\batitle{Global stability of a generalized epidemic model}.
\bjtitle{Journal of Mathematical Analysis and Applications}
\bvolume{362}(\bissue{2}),
\bfpage{286}--\blpage{300}
(\byear{2010})
\end{barticle}
\endbibitem

\bibitem{buchholz2014input}
\begin{bbook}
\bauthor{\bsnm{Buchholz}, \binits{P.}},
\bauthor{\bsnm{Kriege}, \binits{J.}},
\bauthor{\bsnm{Felko}, \binits{I.}}:
\bbtitle{Input Modeling with Phase-type Distributions and Markov Models: Theory
  and Applications}.
\bpublisher{Springer}
(\byear{2014})
\end{bbook}
\endbibitem

\bibitem{wang2014stability}
\begin{bchapter}
\bauthor{\bsnm{Wang}, \binits{N.}},
\bauthor{\bsnm{Pang}, \binits{J.}},
\bauthor{\bsnm{Wang}, \binits{J.}}:
\bctitle{Stability analysis of a multigroup seir epidemic model with general
  latency distributions}.
In: \bbtitle{Abstract and Applied Analysis},
vol. \bseriesno{2014}
(\byear{2014}).
\bcomment{Hindawi}
\end{bchapter}
\endbibitem

\bibitem{tuite2010estimated}
\begin{barticle}
\bauthor{\bsnm{Tuite}, \binits{A.R.}},
\bauthor{\bsnm{Greer}, \binits{A.L.}},
\bauthor{\bsnm{Whelan}, \binits{M.}},
\bauthor{\bsnm{Winter}, \binits{A.-L.}},
\bauthor{\bsnm{Lee}, \binits{B.}},
\bauthor{\bsnm{Yan}, \binits{P.}},
\bauthor{\bsnm{Wu}, \binits{J.}},
\bauthor{\bsnm{Moghadas}, \binits{S.}},
\bauthor{\bsnm{Buckeridge}, \binits{D.}},
\bauthor{\bsnm{Pourbohloul}, \binits{B.}}, \betal:
\batitle{Estimated epidemiologic parameters and morbidity associated with
  pandemic h1n1 influenza}.
\bjtitle{Cmaj}
\bvolume{182}(\bissue{2}),
\bfpage{131}--\blpage{136}
(\byear{2010})
\end{barticle}
\endbibitem

\bibitem{akaike1998information}
\begin{bchapter}
\bauthor{\bsnm{Akaike}, \binits{H.}}:
\bctitle{Information theory and an extension of the maximum likelihood
  principle}.
In: \bbtitle{Selected Papers of Hirotugu Akaike},
pp. \bfpage{199}--\blpage{213}.
\bpublisher{Springer}
(\byear{1998})
\end{bchapter}
\endbibitem

\bibitem{hsieh2010epidemic}
\begin{barticle}
\bauthor{\bsnm{Hsieh}, \binits{Y.-H.}},
\bauthor{\bsnm{Fisman}, \binits{D.N.}},
\bauthor{\bsnm{Wu}, \binits{J.}}:
\batitle{On epidemic modeling in real time: An application to the 2009 novel a
  (h1n1) influenza outbreak in canada}.
\bjtitle{BMC research notes}
\bvolume{3}(\bissue{1}),
\bfpage{283}
(\byear{2010})
\end{barticle}
\endbibitem

\bibitem{wearing2005appropriate}
\begin{barticle}
\bauthor{\bsnm{Wearing}, \binits{H.J.}},
\bauthor{\bsnm{Rohani}, \binits{P.}},
\bauthor{\bsnm{Keeling}, \binits{M.J.}}:
\batitle{Appropriate models for the management of infectious diseases}.
\bjtitle{PLoS medicine}
\bvolume{2}(\bissue{7}),
\bfpage{174}
(\byear{2005})
\end{barticle}
\endbibitem

\end{thebibliography}

\newcommand{\BMCxmlcomment}[1]{}

\BMCxmlcomment{

<refgrp>

<bibl id="B1">
  <title><p>Infectious diseases of humans: dynamics and control</p></title>
  <aug>
    <au><snm>Anderson</snm><fnm>RM</fnm></au>
    <au><snm>May</snm><fnm>RM</fnm></au>
  </aug>
  <publisher>Oxford university press</publisher>
  <pubdate>1992</pubdate>
</bibl>

<bibl id="B2">
  <title><p>A contribution to the mathematical theory of epidemics</p></title>
  <aug>
    <au><snm>Kermack</snm><fnm>WO</fnm></au>
    <au><snm>McKendrick</snm><fnm>AG</fnm></au>
  </aug>
  <source>Proceedings of the royal society of london. Series A, Containing
  papers of a mathematical and physical character</source>
  <publisher>JSTOR</publisher>
  <pubdate>1927</pubdate>
  <volume>115</volume>
  <issue>772</issue>
  <fpage>700</fpage>
  <lpage>-721</lpage>
</bibl>

<bibl id="B3">
  <title><p>Global dynamics of an \uppercase{SEIR} epidemic model with vertical
  transmission</p></title>
  <aug>
    <au><snm>Smith</snm><fnm>HL</fnm></au>
    <au><snm>Wang</snm><fnm>L</fnm></au>
    <au><snm>Li</snm><fnm>MY</fnm></au>
  </aug>
  <source>SIAM Journal on Applied Mathematics</source>
  <publisher>SIAM</publisher>
  <pubdate>2001</pubdate>
  <volume>62</volume>
  <issue>1</issue>
  <fpage>58</fpage>
  <lpage>-69</lpage>
</bibl>

<bibl id="B4">
  <title><p>Integral equation models for endemic infectious
  diseases</p></title>
  <aug>
    <au><snm>Hethcote</snm><fnm>HW</fnm></au>
    <au><snm>Tudor</snm><fnm>DW</fnm></au>
  </aug>
  <source>Journal of mathematical biology</source>
  <publisher>Springer</publisher>
  <pubdate>1980</pubdate>
  <volume>9</volume>
  <issue>1</issue>
  <fpage>37</fpage>
  <lpage>-47</lpage>
</bibl>

<bibl id="B5">
  <title><p>Modeling relapse in infectious diseases</p></title>
  <aug>
    <au><snm>Driessche</snm><fnm>P</fnm></au>
    <au><snm>Zou</snm><fnm>X</fnm></au>
  </aug>
  <source>Mathematical biosciences</source>
  <publisher>Elsevier</publisher>
  <pubdate>2007</pubdate>
  <volume>207</volume>
  <issue>1</issue>
  <fpage>89</fpage>
  <lpage>-103</lpage>
</bibl>

<bibl id="B6">
  <title><p>Influence of delayed viral production on viral dynamics in HIV-1
  infected patients</p></title>
  <aug>
    <au><snm>Mittler</snm><fnm>JE</fnm></au>
    <au><snm>Sulzer</snm><fnm>B</fnm></au>
    <au><snm>Neumann</snm><fnm>AU</fnm></au>
    <au><snm>Perelson</snm><fnm>AS</fnm></au>
  </aug>
  <source>Mathematical biosciences</source>
  <publisher>Elsevier</publisher>
  <pubdate>1998</pubdate>
  <volume>152</volume>
  <issue>2</issue>
  <fpage>143</fpage>
  <lpage>-163</lpage>
</bibl>

<bibl id="B7">
  <title><p>Biological delay systems: linear stability theory</p></title>
  <aug>
    <au><snm>MacDonald</snm><fnm>N</fnm></au>
    <au><snm>MacDonald</snm><fnm>N</fnm></au>
  </aug>
  <publisher>Cambridge University Press</publisher>
  <pubdate>2008</pubdate>
  <volume>9</volume>
</bibl>

<bibl id="B8">
  <title><p>Global stability for the \uppercase{SEIR} model in
  epidemiology</p></title>
  <aug>
    <au><snm>Li</snm><fnm>MY</fnm></au>
    <au><snm>Muldowney</snm><fnm>JS</fnm></au>
  </aug>
  <source>Mathematical biosciences</source>
  <publisher>Elsevier</publisher>
  <pubdate>1995</pubdate>
  <volume>125</volume>
  <issue>2</issue>
  <fpage>155</fpage>
  <lpage>-164</lpage>
</bibl>

<bibl id="B9">
  <title><p>Realistic distributions of infectious periods in epidemic models:
  changing patterns of persistence and dynamics</p></title>
  <aug>
    <au><snm>Lloyd</snm><fnm>AL</fnm></au>
  </aug>
  <source>Theoretical population biology</source>
  <publisher>Elsevier</publisher>
  <pubdate>2001</pubdate>
  <volume>60</volume>
  <issue>1</issue>
  <fpage>59</fpage>
  <lpage>-71</lpage>
</bibl>

<bibl id="B10">
  <title><p>Effects of the infectious period distribution on predicted
  transitions in childhood disease dynamics</p></title>
  <aug>
    <au><snm>Krylova</snm><fnm>O</fnm></au>
    <au><snm>Earn</snm><fnm>DJ</fnm></au>
  </aug>
  <source>Journal of The Royal Society Interface</source>
  <publisher>The Royal Society</publisher>
  <pubdate>2013</pubdate>
  <volume>10</volume>
  <issue>84</issue>
  <fpage>20130098</fpage>
</bibl>

<bibl id="B11">
  <title><p>Qualitative study of a quarantine/isolation model with multiple
  disease stages</p></title>
  <aug>
    <au><snm>Safi</snm><fnm>MA</fnm></au>
    <au><snm>Gumel</snm><fnm>AB</fnm></au>
  </aug>
  <source>Applied Mathematics and Computation</source>
  <publisher>Elsevier</publisher>
  <pubdate>2011</pubdate>
  <volume>218</volume>
  <issue>5</issue>
  <fpage>1941</fpage>
  <lpage>-1961</lpage>
</bibl>

<bibl id="B12">
  <title><p>Epidemiological models with non-exponentially distributed disease
  stages and applications to disease control</p></title>
  <aug>
    <au><snm>Feng</snm><fnm>Z</fnm></au>
    <au><snm>Xu</snm><fnm>D</fnm></au>
    <au><snm>Zhao</snm><fnm>H</fnm></au>
  </aug>
  <source>Bulletin of mathematical biology</source>
  <publisher>Springer</publisher>
  <pubdate>2007</pubdate>
  <volume>69</volume>
  <issue>5</issue>
  <fpage>1511</fpage>
  <lpage>-1536</lpage>
</bibl>

<bibl id="B13">
  <title><p>Fractional order compartment models</p></title>
  <aug>
    <au><snm>Angstmann</snm><fnm>CN</fnm></au>
    <au><snm>Erickson</snm><fnm>AM</fnm></au>
    <au><snm>Henry</snm><fnm>BI</fnm></au>
    <au><snm>McGann</snm><fnm>AV</fnm></au>
    <au><snm>Murray</snm><fnm>JM</fnm></au>
    <au><snm>Nichols</snm><fnm>JA</fnm></au>
  </aug>
  <source>SIAM Journal on Applied Mathematics</source>
  <publisher>SIAM</publisher>
  <pubdate>2017</pubdate>
  <volume>77</volume>
  <issue>2</issue>
  <fpage>430</fpage>
  <lpage>-446</lpage>
</bibl>

<bibl id="B14">
  <title><p>Modeling to capture bystander-killing effect by released payload in
  target positive tumor cells</p></title>
  <aug>
    <au><snm>Byun</snm><fnm>JH</fnm></au>
    <au><snm>Jung</snm><fnm>IH</fnm></au>
  </aug>
  <source>BMC cancer</source>
  <publisher>BioMed Central</publisher>
  <pubdate>2019</pubdate>
  <volume>19</volume>
  <issue>1</issue>
  <fpage>194</fpage>
</bibl>

<bibl id="B15">
  <title><p>Global asymptotic properties of an SEIRS model with multiple
  infectious stages</p></title>
  <aug>
    <au><snm>Melesse</snm><fnm>DY</fnm></au>
    <au><snm>Gumel</snm><fnm>AB</fnm></au>
  </aug>
  <source>Journal of Mathematical Analysis and Applications</source>
  <publisher>Elsevier</publisher>
  <pubdate>2010</pubdate>
  <volume>366</volume>
  <issue>1</issue>
  <fpage>202</fpage>
  <lpage>-217</lpage>
</bibl>

<bibl id="B16">
  <title><p>Global stability of a generalized epidemic model</p></title>
  <aug>
    <au><snm>Iwami</snm><fnm>S</fnm></au>
    <au><snm>Hara</snm><fnm>T</fnm></au>
  </aug>
  <source>Journal of Mathematical Analysis and Applications</source>
  <publisher>Elsevier</publisher>
  <pubdate>2010</pubdate>
  <volume>362</volume>
  <issue>2</issue>
  <fpage>286</fpage>
  <lpage>-300</lpage>
</bibl>

<bibl id="B17">
  <title><p>Input modeling with phase-type distributions and Markov models:
  theory and applications</p></title>
  <aug>
    <au><snm>Buchholz</snm><fnm>P</fnm></au>
    <au><snm>Kriege</snm><fnm>J</fnm></au>
    <au><snm>Felko</snm><fnm>I</fnm></au>
  </aug>
  <publisher>Springer</publisher>
  <pubdate>2014</pubdate>
</bibl>

<bibl id="B18">
  <title><p>Stability analysis of a multigroup SEIR epidemic model with general
  latency distributions</p></title>
  <aug>
    <au><snm>Wang</snm><fnm>N</fnm></au>
    <au><snm>Pang</snm><fnm>J</fnm></au>
    <au><snm>Wang</snm><fnm>J</fnm></au>
  </aug>
  <source>Abstract and Applied Analysis</source>
  <pubdate>2014</pubdate>
  <volume>2014</volume>
</bibl>

<bibl id="B19">
  <title><p>Estimated epidemiologic parameters and morbidity associated with
  pandemic H1N1 influenza</p></title>
  <aug>
    <au><snm>Tuite</snm><fnm>AR</fnm></au>
    <au><snm>Greer</snm><fnm>AL</fnm></au>
    <au><snm>Whelan</snm><fnm>M</fnm></au>
    <au><snm>Winter</snm><fnm>AL</fnm></au>
    <au><snm>Lee</snm><fnm>B</fnm></au>
    <au><snm>Yan</snm><fnm>P</fnm></au>
    <au><snm>Wu</snm><fnm>J</fnm></au>
    <au><snm>Moghadas</snm><fnm>S</fnm></au>
    <au><snm>Buckeridge</snm><fnm>D</fnm></au>
    <au><snm>Pourbohloul</snm><fnm>B</fnm></au>
    <au><cnm>others</cnm></au>
  </aug>
  <source>Cmaj</source>
  <publisher>Can Med Assoc</publisher>
  <pubdate>2010</pubdate>
  <volume>182</volume>
  <issue>2</issue>
  <fpage>131</fpage>
  <lpage>-136</lpage>
</bibl>

<bibl id="B20">
  <title><p>Information theory and an extension of the maximum likelihood
  principle</p></title>
  <aug>
    <au><snm>Akaike</snm><fnm>H</fnm></au>
  </aug>
  <source>Selected papers of hirotugu akaike</source>
  <publisher>Springer</publisher>
  <pubdate>1998</pubdate>
  <fpage>199</fpage>
  <lpage>-213</lpage>
</bibl>

<bibl id="B21">
  <title><p>On epidemic modeling in real time: An application to the 2009 Novel
  A (H1N1) influenza outbreak in Canada</p></title>
  <aug>
    <au><snm>Hsieh</snm><fnm>YH</fnm></au>
    <au><snm>Fisman</snm><fnm>DN</fnm></au>
    <au><snm>Wu</snm><fnm>J</fnm></au>
  </aug>
  <source>BMC research notes</source>
  <publisher>BioMed Central</publisher>
  <pubdate>2010</pubdate>
  <volume>3</volume>
  <issue>1</issue>
  <fpage>283</fpage>
</bibl>

<bibl id="B22">
  <title><p>Appropriate models for the management of infectious
  diseases</p></title>
  <aug>
    <au><snm>Wearing</snm><fnm>HJ</fnm></au>
    <au><snm>Rohani</snm><fnm>P</fnm></au>
    <au><snm>Keeling</snm><fnm>MJ</fnm></au>
  </aug>
  <source>PLoS medicine</source>
  <publisher>Public Library of Science</publisher>
  <pubdate>2005</pubdate>
  <volume>2</volume>
  <issue>7</issue>
  <fpage>e174</fpage>
</bibl>

</refgrp>
} 
\newpage 

\section*{Figures}
\begin{figure}[h!]
	\centering{
		\includegraphics[width=.85\textwidth]{}}	\caption{\label{fig1}Scheme of the model \eqref{model}.}
\end{figure}  

\begin{figure}[h!]
\centering 
\subfloat[]{\label{fig2a}\includegraphics[width=.45\textwidth]{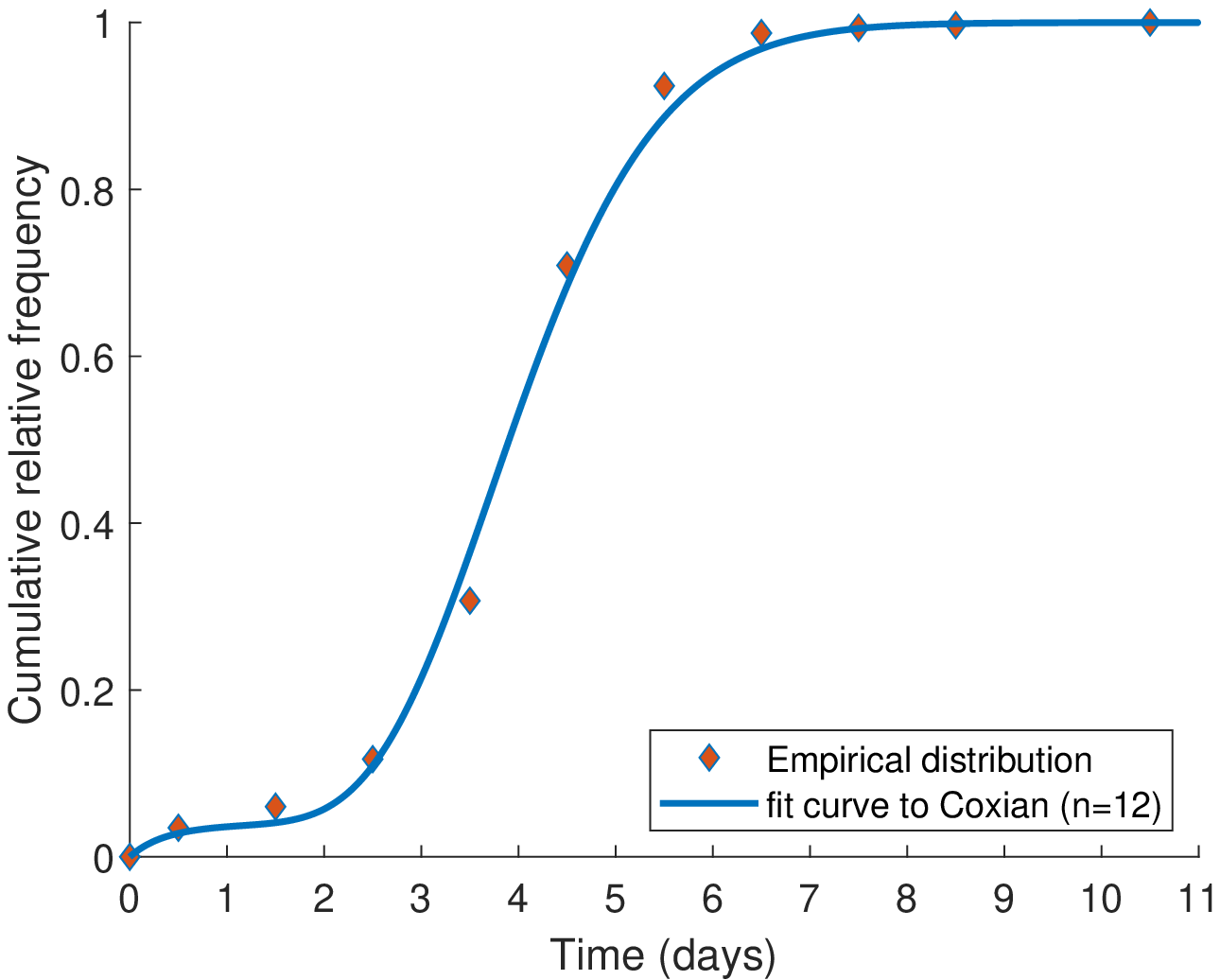}}~~~~~
\subfloat[]{\label{fig2b}\includegraphics[width=.45\textwidth]{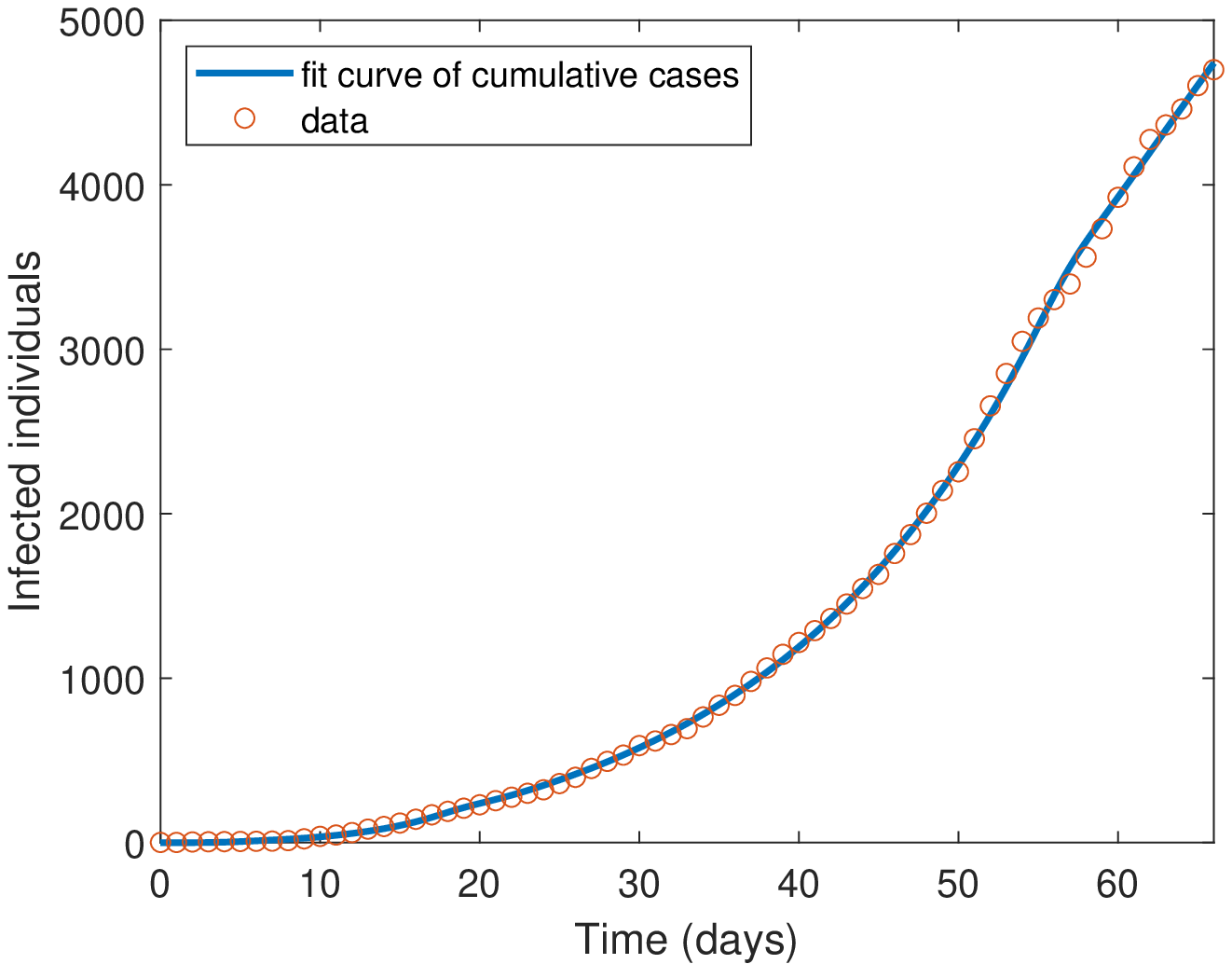}}
\caption{(a) Fitting empirical distribution of incubation period to Coxian of 12-chains, and (b) fitting case data to Coxian distributed SEIR model with 12-chains.}
\label{fig2}
\end{figure}

\newpage 
\section*{Tables}
\begin{table}[h!]
\tiny 
\begin{tabular}{@{}llll@{}}
\toprule
\begin{tabular}[c]{@{}l@{}}{\bf Distribution,}\\ { \bf Survival function($P_\triangle (t)$)}\end{tabular} & {\bf Derived model ($S(t)$, $I(t)$)}                                                                                                                                                                                                                                                                                                                                                                                                                       & \begin{tabular}[c]{@{}l@{}}{\bf  Basic reproduction}\\ { \bf number ($\beta(t)\equiv\bar{\beta}$)}\end{tabular} & {\bf Characteristics}                                                                                                                                                  \\ \midrule
 \begin{tabular}[c]{@{}l@{}}{\bf Dirac-delta}\\ $P_D(t)=\begin{cases} 1,~~u\in[0,\tau),\\ 0 ,~~u \in [\tau,\infty), \end{cases}$\\ for $\tau>0$, constant\end{tabular}                                                     & \begin{tabular}[c]{@{}l@{}}$\cfrac{\du E(t)}{\du t}=\beta(t) S(t)I(t) - \beta(t-\tau) S(t-\tau)I(t-\tau)\exp(-\mu \tau) -\mu E(t),$\\ $\cfrac{\du I(t)}{\du t} = \beta(t-\tau)S(t-\tau)I(t-\tau)\exp(-\mu \tau) -(\gamma+\mu) I(t).$\end{tabular}                                                                                                                                                                                                     & $\cfrac{\bar{\beta}\exp(-\mu\tau)}{\gamma+\mu}$                                & \begin{tabular}[c]{@{}l@{}}System of DDE,\\ Exact time of "delay", \\ NOT distributed \end{tabular}                                                \\ \midrule
\begin{tabular}[c]{@{}l@{}}{\bf Exponential}\\ $P_E(t)=\exp(-\lambda t)$                                                                                                                                                                            \end{tabular}                                                     & \begin{tabular}[c]{@{}l@{}}$\cfrac{\du E(t)}{\du t}=\beta(t) S(t)I(t) - (\lambda+\mu)E(t),$\\ $\cfrac{\du I(t)}{\du t}=\lambda E(t)-(\gamma+\mu)I(t).$\end{tabular}                                                                                                                                                                                                                                                                                   & $\cfrac{\bar{\beta}\mu}{(\lambda+\mu)(\gamma+\mu)}$                                 & \begin{tabular}[c]{@{}l@{}}System of simple ODE,\\ Memoryless property\end{tabular}                                                         \\ \midrule
  \begin{tabular}[c]{@{}l@{}}{\bf Gamma}\\ $P_G(u)=\sum_{i=0}^{n-1} \cfrac{1}{i!} e^{-n\lambda u}(n\lambda u)^i$,\\ for $n$, positive integer\end{tabular}                                                       & \begin{tabular}[c]{@{}l@{}}$\cfrac{\du E_n(t)}{\du t}= \beta(t) S(t)I(t)-(n\lambda+\mu)E_n, $\\ $\cfrac{\du E_i(t)}{\du t}= n\lambda E_{i+1}-(n\lambda+\mu)E_i,\textrm{ for }i=n-1,\cdots,2,1,$\\ $\cfrac{\du I(t)}{\du t}=n\lambda E_{1}(t) -(\gamma+\mu) I(t),$\\ with $E(t)=\sum_{i=0}^{n-1} E_{n-i}(t)$.\end{tabular}                                                                                                                             & $\cfrac{\bar{\beta}}{\gamma+\mu}\left(\cfrac{n\lambda}{n\lambda+\mu}\right)^n$                                & \begin{tabular}[c]{@{}l@{}}Linear chain,\\ unimodal, short-tailed \\ distribution \end{tabular}                                           \\ \midrule
  \begin{tabular}[c]{@{}l@{}}{\bf Mittag-Leffler} \\ $P_M(t)=E_{\alpha,1}\left(-(t/\zeta)^\alpha\right),$\\ for $0<\alpha\le 1$ \\ where $E_{\alpha,\beta}(z)=\sum_{k=0}^\infty \frac{z^k}{\Gamma(\alpha k+\beta)}$\end{tabular} & \begin{tabular}[c]{@{}l@{}}$\cfrac{\du E(t)}{\du t}=\beta(t) S(t)I(t) - \exp(-\mu t)\zeta^{-\alpha}{}_0\mathcal{D}_t^{1-\alpha}\left[\exp(\mu t)E(t)\right] -\mu E(t),$\\ $\cfrac{\du I(t)}{\du t}=\exp(-\mu t)\zeta^{-\alpha}{}_0\mathcal{D}_t^{1-\alpha}\left[\exp(\mu t)E(t)\right] -(\gamma+\mu) I(t),$\\ where ${}_0\mathcal{D}_t^{1-\alpha}[f(t)]=\cfrac{1}{\Gamma(\alpha)} \cfrac{\du }{\du t} \int_0^t (t-u)^{\alpha-1}f(u)\du u$.\end{tabular} & 	$\cfrac{\bar{\beta}}{\gamma+\mu} \cdot \cfrac{1}{1+(\zeta\mu)^\alpha}$                                 & \begin{tabular}[c]{@{}l@{}}System of FDE,\\ heavy-tailed distribution,\\ Hard to get exact form \\ of distribution\end{tabular} \\ \bottomrule
\end{tabular}
\caption{\label{tablehistory}SEIR models generated by various distributions with delay effect in the previous papers.}
\end{table}

\end{document}